\renewcommand{\mathbf}[1]{\boldsymbol{#1}}
\newtheorem{theorem}{Theorem}
\newtheorem{lemma}[theorem]{Lemma}
\theoremstyle{definition}
\newcommand{\abs}[1]{\ensuremath{\left|#1\right|}}
\newcommand{\defeq}[0]{\ensuremath{:=}}
\newcommand{\pdiff}[2][]{\ensuremath{\frac{\partial#1}{\partial#2}}}
\newcommand{\inb}[1]{\ensuremath{\left\{#1\right\}}}
\newcommand{\inp}[1]{\left(#1\right)}
\renewcommand{\deg}[1]{\ensuremath{\mathrm{deg}\inp{#1}}}
\renewcommand{\vec}[1]{\mathbf{#1}}
\newcommand{\nvec}[2][V]{\inp{#2_v}_{v \in #1}}
\newcommand{\D}[1]{\ensuremath{\mathcal{D}_{#1}}}
\newcommand{\ignore}[1]{}
\author{
         \textsc{Piyush Srivastava}\footnote{UC Berkeley, \texttt{piyushsriva@gmail.com}} \and
        \textsc{Mario Szegedy}\footnote{Rutgers University, \texttt{szegedy@cs.rutgers.edu}}
        }
\begin{document}
\title{A simplified proof of a Lee-Yang type theorem}
\maketitle
In this short note, we give a simple proof of a Lee-Yang type theorem
which appeared
in~\cite{sinclair14:_lee_yang_theor_compl_comput_averag}.  Given an
undirected graph $G = (V, E)$, we denote the partition function of the
(ferromagnetic) Ising model as
\begin{displaymath}
  Z(G, \beta, \vec{z}) \defeq \sum_{\sigma: V \rightarrow \inb{+,
      -}}\beta^{d(\sigma)} \prod_{v:\sigma(v) = +} z_v, \;\;\;\;\;\;\;\;(\vec{z}= \nvec{z})
\end{displaymath}
where $d(\sigma)$ is the number of edges $e = \inb{i, j}$ such that
$\sigma(i) \neq \sigma(j)$, and $0 < \beta < 1$ is the \emph{edge
  activity}.  The arguments $z_i$ of the partition function are called
\emph{vertex activities} or \emph{fugacities}. We then define the
operator $\D{G}$
\begin{displaymath}
  \D{G} = \sum_{v \in V} z_v \pdiff[]{z_v},
\end{displaymath}
which derives its importance from the fact that the \emph{mean
  magnetization} $M(G, \beta, \vec{z})$ of the Ising model on $G$ for
a given setting of the edge activity and the fugacities can be written
as
\begin{displaymath}
  M(G, \beta, \vec{z}) = \frac{\D{G}Z(G, \beta, \vec{z})}{Z(G,
    \beta, \vec{z})}.
\end{displaymath}
The theorem whose proof in this note we simplify is the following:
\begin{theorem}[\cite{sinclair14:_lee_yang_theor_compl_comput_averag}]
  \label{thm:1}
  Let $G = (V,E)$ be a connected undirected graph on $n$ vertices, and
  assume $0 < \beta < 1$.  Then $\D{G}Z(G, \beta, \vec{z}) \neq 0$ if
  for all $v \in V$, $z_{v}$ is a complex number with absolute value
  one.
  \end{theorem}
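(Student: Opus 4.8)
The plan is to reduce the statement to a question about a single complex variable by scaling all the fugacities together. Fix $\vec z$ with $|z_v|=1$ for all $v$ and set $P(t):=Z(G,\beta,t\vec z)$; this is a polynomial of degree $n:=|V|$ whose leading coefficient $\prod_v z_v$ is nonzero. Expanding $Z$ as a sum over configurations $\sigma$ and differentiating in $t$ shows that $P'(1)=\sum_{\sigma}|\sigma^{-1}(+)|\,\beta^{d(\sigma)}\prod_{v:\sigma(v)=+}z_v=\D{G}Z(G,\beta,\vec z)$, so the theorem is equivalent to $P'(1)\neq0$.

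Next I would extract real and imaginary parts using the spin-flip symmetry. Write $z_v=e^{i\theta_v}$, set $\Theta=\sum_v\theta_v$, and identify $\sigma(v)=\pm$ with $\sigma_v=\pm1$; using $|\sigma^{-1}(+)|=\tfrac12(n+\sum_v\sigma_v)$ one finds
\[
e^{-i\Theta/2}Z(G,\beta,\vec z)=\widehat Z,\qquad e^{-i\Theta/2}\,\D{G}Z(G,\beta,\vec z)=\tfrac n2\widehat Z+i\varrho,
\]
where $\widehat Z=\sum_\sigma\beta^{d(\sigma)}e^{\frac i2\sum_v\theta_v\sigma_v}$ and $\varrho=\tfrac{1}{2i}\sum_\sigma(\sum_v\sigma_v)\,\beta^{d(\sigma)}e^{\frac i2\sum_v\theta_v\sigma_v}$. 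Replacing $\sigma$ by $-\sigma$ conjugates the exponential, so $\widehat Z$ and $\varrho$ are both real. Consequently $\D{G}Z(G,\beta,\vec z)=0$ would force $\widehat Z=0$ and $\varrho=0$ simultaneously. In particular the theorem holds at every point with $Z(G,\beta,\vec z)\neq0$ (equivalently $\widehat Z\neq0$), since there $\operatorname{Re}(e^{-i\Theta/2}\D{G}Z)=\tfrac n2\widehat Z\neq0$. (This case is also immediate from the classical Lee-Yang circle theorem: it, together with the symmetry $z_v\mapsto z_v^{-1}$, puts all zeros of $P$ on the unit circle, and then $\operatorname{Re}\frac{P'(1)}{P(1)}=\sum_j\operatorname{Re}\frac{1}{1-\alpha_j}=\tfrac n2>0$ because $|\alpha_j|=1$ for each root $\alpha_j$.)

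The crux is therefore the case $Z(G,\beta,\vec z)=0$, where I must show $\varrho\neq0$; this is where connectedness of $G$ is used (the statement is false for disconnected $G$, e.g.\ for two isolated vertices with $z_v=-1$). I would invoke the ferromagnetic correlation inequalities. Since $0<\beta<1$, the weights $\beta^{d(\sigma)}$ are those of a zero-field ferromagnet on $G$; write $\langle\,\cdot\,\rangle_0$ for its expectation, $Z_0=\sum_\sigma\beta^{d(\sigma)}>0$, and $\sigma_A=\prod_{v\in A}\sigma_v$. Expanding $\prod_v e^{\frac i2\theta_v\sigma_v}=\sum_{S}i^{|S|}c_S\,\sigma_S$ with $c_S=\prod_{v\in S}\sin\tfrac{\theta_v}{2}\prod_{v\notin S}\cos\tfrac{\theta_v}{2}$, and using Griffiths' first inequality ($\langle\sigma_A\rangle_0=0$ when $|A|$ is odd, $\ge0$ otherwise), the terms with $|S|$ even drop out and one is left with
\[
\varrho=\frac{Z_0}{2}\sum_{|S|\ \mathrm{odd}}(-1)^{(|S|-1)/2}\,c_S\sum_u\langle\sigma_{S\,\triangle\,\{u\}}\rangle_0 .
\]
For $|S|$ odd each set $S\triangle\{u\}$ has even size, so by Griffiths' second inequality (or the random-current representation) each inner sum $\sum_u\langle\sigma_{S\triangle\{u\}}\rangle_0$ is \emph{strictly} positive on a connected graph with $\beta>0$.

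The remaining difficulty — and the step I expect to be the real work — is to control the signs $(-1)^{(|S|-1)/2}c_S$ so that this alternating sum cannot vanish. Here I would bring in the constraint furnished by $\widehat Z=0$, namely $\sum_{|S|\ \mathrm{even}}(-1)^{|S|/2}c_S\langle\sigma_S\rangle_0=0$; note also that $\varrho=-\sum_v\partial_{\theta_v}\widehat Z$, so the assumption $\widehat Z=\varrho=0$ says precisely that the restriction $s\mapsto\widehat Z(\vec\theta+s\vec1)$ has a zero of order at least two at $s=0$. Ruling this out for a connected ferromagnet — presumably via higher-order (GHS/Lebowitz-type) correlation inequalities together with the above constraint, or alternatively by an Asano-contraction argument applied directly to $\D{G}Z$ in place of $Z$ — is the part of the argument I anticipate being the main obstacle.
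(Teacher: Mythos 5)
Your reduction to the univariate polynomial $P(t)=Z(G,\beta,t\vec{z})$ and the identity $P'(1)=\D{G}Z(G,\beta,\vec{z})$ match the paper's starting point, and your treatment of the case $Z(G,\beta,\vec{z})\neq 0$ is correct and complete: Lee--Yang places all roots of $P$ on the unit circle, so $\operatorname{Re}\bigl(P'(1)/P(1)\bigr)=n/2>0$. (The paper reaches the same waypoint via Gauss--Lucas: $P'(1)=0$ with all roots on the circle forces $P(1)=0$.) However, the case $Z(G,\beta,\vec{z})=0$ --- which you yourself identify as the crux, and which is the only place connectedness can enter --- is not proved. Your proposed route through Griffiths-type correlation inequalities stalls exactly where you say it does: the coefficients $(-1)^{(|S|-1)/2}c_S$ alternate in sign, the inequalities only give nonnegativity (or strict positivity) of the individual correlations, and you offer no mechanism to rule out cancellation in the sum defining $\varrho$. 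As written, the proposal establishes the theorem only where $Z\neq 0$ and leaves the remaining case as an acknowledged obstacle, so there is a genuine gap.

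The paper closes this case not with correlation inequalities but with an elementary perturbation argument. Write $Z(G,\beta,\vec{z})=A\,z_u+B$, where $A=\beta^{\deg{u}}\,Z(G-\{u\},\beta,\vec{z}')$ with $z'_w=z_w/\beta$ for neighbours $w$ of $u$; connectedness guarantees that every component of $G-\{u\}$ contains such a neighbour with $\abs{z'_w}=1/\beta>1$, so Lee--Yang applied componentwise gives $A\neq 0$ on the torus. If $P(1)=P'(1)=0$, then $P(1-\epsilon)=O(\epsilon^{2})$, and solving
\[
Z\bigl(G,\beta,(1-\epsilon)\vec{z}+\tau\vec{e_{u}}\bigr)=P(1-\epsilon)+\bigl(A(\vec{z})+O(\epsilon)\bigr)\tau=0
\]
yields a root $\tau$ with $\abs{\tau}<\epsilon$ for small $\epsilon$; the resulting zero of $Z$ has all fugacities of modulus strictly less than one, contradicting Lee--Yang. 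Your observation that $\widehat Z=\varrho=0$ means the restriction of $Z$ to the ray $t\vec{z}$ has a double zero at $t=1$ is precisely the input this argument consumes; what is missing from your write-up is the single-coordinate perturbation that converts that double zero into a zero strictly inside the unit polydisk.
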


In~\cite{sinclair14:_lee_yang_theor_compl_comput_averag}, the theorem
was proved using a sequence of Asano-type
contractions~\cite{asano_lee-yang_1970}, a technique which originated
in Asano's proof of the Lee-Yang theorem~\cite{leeyan52b}.  The proof
we present here completely eschews the Asano contraction in favor of a
simpler analytic argument.  In our proof we need the following
version of the Lee-Yang theorem:
\begin{theorem}[\cite{leeyan52b,asano_lee-yang_1970}]\label{lyth}
  Let $G = (V, E)$ be a connected undirected graph on $n$ vertices,
  and suppose $0 < \beta < 1$.  Then $Z(G, \beta,
  \vec{z}) \neq 0$ if
  $\abs{z_v} \geq 1$ for all $v \in V$ and in addition 
  $\abs{z}_u > 1$ for some $u \in V$.  By symmetry, the conclusion also holds when
  $\abs{z_v} \leq 1$ for all $v \in V$ and in addition $\abs{z}_u < 1$ for some $u\in V$.
  %Let $\vec{z} = \nvec{z}$ be a collection of vertex activities such that
\end{theorem}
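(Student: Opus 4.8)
The plan is to prove the equivalent statement that every zero of $Z\inp{G,\beta,\vec z}$ whose activities all satisfy $\abs{z_v}\ge 1$ in fact has $\abs{z_v}=1$ for \emph{all} $v$; the theorem is then immediate, since such a zero cannot have $\abs{z_u}>1$. I would first record the one-edge computation. For a single edge $\inb{i,j}$ we have $Z=1+\beta z_i+\beta z_j+z_iz_j$, and solving $Z=0$ for $z_j$ gives $\abs{z_j}^{2}=\frac{1+\beta\inp{z_i+\bar z_i}+\beta^{2}\abs{z_i}^{2}}{\beta^{2}+\beta\inp{z_i+\bar z_i}+\abs{z_i}^{2}}$, which (using $0<\beta<1$) is strictly less than $1$ exactly when $\abs{z_i}>1$. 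Hence a single-edge partition function has no zero with $\abs{z_i}\ge 1$ and $\abs{z_j}\ge 1$ unless both moduli equal $1$. I would also record the reflection identity $\inp{\prod_v z_v}\,Z\inp{G,\beta,\inp{1/z_v}_v}=Z\inp{G,\beta,\vec z}$, which follows from the spin-flip $\sigma\mapsto-\sigma$; it shows the zero set is invariant under $z_v\mapsto 1/z_v$, so the ``all $>1$'' and ``all $<1$'' statements are equivalent.

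Next I would establish the open-region statement: $Z\ne 0$ whenever $\abs{z_v}>1$ for all $v$. Give each vertex $v$ a separate copy $z_v^{(e)}$ for every incident edge $e$, so that $\widetilde Z=\prod_{e=\inb{i,j}}\inp{1+\beta z_i^{(e)}+\beta z_j^{(e)}+z_i^{(e)}z_j^{(e)}}$ factorizes; by the one-edge computation $\widetilde Z\ne 0$ whenever all copies have modulus $>1$. I would then invoke the Asano contraction lemma: if a polynomial affine in two variables, $a+bx+cy+dxy$, is nonvanishing whenever $\abs{x}>1$ and $\abs{y}>1$, then its contraction $a+dt$ (keeping the constant term and the $xy$-coefficient and renaming the merged variable) is nonvanishing whenever $\abs{t}>1$. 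Contracting, one vertex at a time, all the copies $z_v^{(e)}$ of each $v$ into a single variable $z_v$ reproduces $Z\inp{G,\beta,\vec z}$ exactly: the monomials surviving contraction are precisely those in which, at every vertex, all copies are present or all absent, i.e.\ the sum over spin assignments defining $Z$. Since the ``modulus $>1$'' nonvanishing is preserved at each step, the open-region statement follows (and, by the reflection identity, so does its ``all $<1$'' counterpart).

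Finally I would upgrade to the stated region by induction on $n$, feeding in the open-region statement. For $n=1$ we have $Z=1+z_v$, which vanishes only at $z_v=-1$, on the circle. For $n\ge 2$, suppose $\abs{z_v^*}\ge 1$ for all $v$ and $Z\inp{\vec z^*}=0$; fix any $u$ and write $Z=A_u+B_uz_u$ with $A_u,B_u$ independent of $z_u$. Whenever the remaining activities all have modulus $>1$ the open-region statement forbids a root with $\abs{z_u}>1$, forcing $\abs{A_u}\le\abs{B_u}$, and this inequality persists at $\vec z^*$ by continuity. Moreover $B_u$ equals $\beta^{\deg{u}}$ times the partition function of $G-u$ with each neighbour's activity divided by $\beta$; since $G$ is connected, every component of $G-u$ contains a neighbour of $u$, whose activity then has modulus at least $1/\beta>1$, so the inductive hypothesis forces $B_u\ne 0$. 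Hence the unique root $-A_u/B_u$ of $Z$ in $z_u$ has modulus $\le 1$, and since $z_u^*$ is this root we get $\abs{z_u^*}\le 1$; with $\abs{z_u^*}\ge 1$ this gives $\abs{z_u^*}=1$. As $u$ was arbitrary, every zero with all $\abs{z_v}\ge 1$ lies on the torus, which is exactly the claim. I expect the main obstacle to be the Asano contraction lemma: its proof rests on the fact that a M\"obius map carries the unit disk to a disk, so that exclusion of roots from a polydisk descends to exclusion of a single root, and it must be accompanied by the bookkeeping verifying that contracting the split product $\widetilde Z$ returns $Z$ itself.
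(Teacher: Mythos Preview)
The paper does not prove this statement: Theorem~\ref{lyth} is quoted from the literature (Lee--Yang and Asano) and used as a black box in the proofs of Lemma~\ref{neqzero} and Theorem~\ref{thm:1}. There is therefore no ``paper's own proof'' to compare against.

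That said, your argument is correct and is essentially the classical Asano proof that the paper cites. The single-edge computation, the spin-flip symmetry $\inp{\prod_v z_v}Z(G,\beta,(1/z_v)_v)=Z(G,\beta,\vec z)$, and the split-and-contract reduction to $\widetilde Z$ are all standard and correctly stated; the bookkeeping that contracting $\widetilde Z$ returns $Z$ is right because the surviving monomials are exactly those indexed by a subset $S\subseteq V$ (the ``$+$'' vertices), each edge contributing $1$, $\beta$, or $1$ according as neither, one, or both endpoints lie in $S$. Your upgrade from the open polydisk to the closed statement by induction on $\abs{V}$ is also sound: the key nonvanishing of the leading coefficient $B_u=\beta^{\deg u}Z(G-\{u\},\beta,\vec z')$ is exactly the content of the paper's Lemma~\ref{neqzero}, and your inductive justification of it matches that lemma's proof. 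The only point you flag as an obstacle---the Asano contraction lemma---is indeed the one nontrivial external input, but it is a one-line consequence of Ruelle's formulation (if $a+bx+cy+dxy\neq 0$ whenever $x,y\notin\bar D$, then $a+dt\neq 0$ whenever $t\notin -\bar D\cdot\bar D=\bar D$), so your plan is complete.
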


Observe that given any vertex $u \in V$, we can decompose the
partition function as
\begin{eqnarray}
  \label{eq:1}
Z(G, \beta, \vec{z}) & = & A \; z_{u} + B    \\\label{eq:11}
A = A(\vec{z}) & = & \beta^{\deg u} Z(G - \inb{u}, \beta, \vec{z}')\;\;\;\;\;\;\;\;\;        z'_{w} = \left\{\begin{array}{ll}
z_{w} & \mbox{when $w \not\sim u$ in $G$} \\
z_w/\beta & \mbox{when $w \sim u$ in $G$} 
\end{array}\right.          \\\nonumber
B = B(\vec{z}) & = & Z(G - \inb{u}, \beta, \vec{z}'')\;\;\;\;\;\;\;\;\;\;\;\;\;\;\;\;\;\;\;           z''_{w} = \left\{\begin{array}{ll}
z_{w} & \mbox{when $w \not\sim u$ in $G$} \\
z_w \beta & \mbox{when $w \sim u$ in $G$} 
\end{array}\right. 
\end{eqnarray}
Neither $\vec{z}'$ nor $\vec{z}''$ contains $z_{u}$ and $G - \inb{u}$ denotes the graph that we obtain from $G$ by leaving out node $u$.
The Lee-Yang theorem has the following simple consequence, which was also
used in~\cite{sinclair14:_lee_yang_theor_compl_comput_averag}.
\begin{lemma} \label{neqzero}
If $G$ is connected, $0 < \beta < 1$, and all vertex activities have absolute value 1,
then $A$ of eq. (\ref{eq:11}) is not zero. \end{lemma}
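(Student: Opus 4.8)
The plan is to reduce the claim to the version of the Lee--Yang theorem stated in Theorem~\ref{lyth}. Since $0<\beta<1$ gives $\beta^{\deg u}\neq 0$, it suffices to show that $Z(G - \inb{u}, \beta, \vec{z}') \neq 0$, where $\vec{z}'$ is as in~\eqref{eq:11}. Here I will use that $Z$ is multiplicative over connected components: if a graph $H$ is the disjoint union of $H_1,\dots,H_k$, then $Z(H,\beta,\cdot) = \prod_{i} Z(H_i,\beta,\cdot)$, because an Ising configuration on $H$ is a tuple of configurations on the $H_i$, and both $d(\sigma)$ and the monomial weight $\prod_{v:\sigma(v)=+}z_v$ factor accordingly. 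Hence it is enough to prove $Z(C,\beta,\vec{z}'|_C)\neq 0$ for each connected component $C$ of $G - \inb{u}$.

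The key observation is that, because $G$ is connected, every connected component $C$ of $G - \inb{u}$ contains at least one neighbour of $u$ in $G$. Indeed, if $C$ had no neighbour of $u$, then $C$ would have no edge to $u$ and, being a component of $G - \inb{u}$, no edge to the rest of $G - \inb{u}$ either; thus $C$ would be a proper nonempty union of connected components of $G$, contradicting connectedness. Fix such a neighbour $w\in C$. By the definition of $\vec{z}'$ in~\eqref{eq:11}, each vertex $x$ of $C$ satisfies $\abs{z'_x}\geq 1$ (it equals $1$ if $x\not\sim u$, and $1/\beta>1$ if $x\sim u$, using $\abs{z_x}=1$ and $0<\beta<1$), and in addition $\abs{z'_w}=1/\beta>1$. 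Applying Theorem~\ref{lyth} to the connected graph $C$ (with activities $\vec{z}'|_C$) therefore gives $Z(C,\beta,\vec{z}'|_C)\neq 0$.

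Taking the product over all components of $G - \inb{u}$ yields $Z(G - \inb{u}, \beta, \vec{z}')\neq 0$, and multiplying by the nonzero factor $\beta^{\deg u}$ gives $A=A(\vec{z})\neq 0$, as claimed. The degenerate case $n=1$ is handled separately and trivially: then $G - \inb{u}$ is the empty graph, whose partition function is $1$, so $A=\beta^{0}=1\neq 0$.

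I do not expect a serious obstacle here. The one genuine step is the graph-theoretic remark that connectedness of $G$ forces every component of $G - \inb{u}$ to meet the neighbourhood of $u$; this is precisely what upgrades the activities on each component to satisfy the hypothesis "$\abs{z_v}\geq 1$ for all $v$ and strictly for some $v$" needed to invoke Theorem~\ref{lyth}. The multiplicativity of $Z$ over components and the bookkeeping of the rescaled activities $\vec{z}'$ are routine.
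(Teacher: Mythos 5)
Your proof is correct and follows essentially the same route as the paper's: factor out $\beta^{\deg u}$, use multiplicativity of $Z$ over the connected components of $G - \inb{u}$, note that connectedness of $G$ forces each component to contain a neighbour $w$ of $u$ with $\abs{z'_w} = 1/\beta > 1$, and invoke Theorem~\ref{lyth} on each component. The only differences are that you spell out the graph-theoretic remark and the degenerate case $n=1$, which the paper leaves implicit.
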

\begin{proof}
Since $\beta\neq 0$, it is sufficient to prove that $Z(G - \inb{u}, \beta, \vec{z}')\neq 0$.
We observe that the latter is a product of the partition functions of the connected
 components of $G-\inb{u}$, and furthermore, any neighbor $w$ of $u$ in
 $G$ in each such component has a vertex activity $z_{w}' =
  z_w/\beta$ with $|z_w/\beta | = |z_w| / \beta = 1/\beta >1$.  
  Due to $G$ being connected, we find such a neighbor $w$ of $u$ in all components of $G-\inb{u}$. We 
  apply Theorem \ref{lyth} to each 
  connected component of $G-\inb{u}$ separately to show that none of the factors
  is zero.
\end{proof}

\begin{proof}[Proof of Theorem~\ref{thm:1}]

  Let $G$ and $\beta$ be as in the hypotheses of the theorem. Suppose
  now that there exists a point $\vec{z^{0}}$ such that $\abs{z^{0}_v} = 1$
  for all $v$, and $\D{G}Z(G, \beta, \vec{z^{0}}) = 0$.  We will show
  that this leads to a contradiction. For our subsequent argument it will be helpful to 
  define the univariate polynomial
  \begin{displaymath}
    f(t) \defeq Z_G(G, \beta, t\vec{z^{0}})\;\;\;\;\;\;\mbox{where}\;\;\;\;\;\;\; t\vec{z^{0}} = \nvec{t z^{0}}
  \end{displaymath}
  
  \begin{lemma}
  $Z(G, \beta, \vec{z^{0}}) = 0$
  \end{lemma}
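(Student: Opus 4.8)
The plan is to restrict $Z$ to the one-parameter family $t\vec{z^{0}}$, observe that the Lee--Yang theorem forces \emph{all} zeros of the resulting polynomial $f$ onto the unit circle, and then derive the claim from an elementary identity satisfied by numbers of the form $1/(1-\alpha)$ with $\abs{\alpha}=1$.

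First I would unwind the definitions. Substituting $z_v = t z^0_v$ into $Z$ gives $f(t)=\sum_{k=0}^{n} c_k t^k$ with $c_k = \sum_{\sigma:\ \abs{\sigma^{-1}(+)}=k}\beta^{d(\sigma)}\prod_{v:\,\sigma(v)=+} z^0_v$; in particular $c_n=\prod_{v\in V} z^0_v\neq 0$ so $\deg f = n\ge 1$, while $c_0=1$ since the all-minus configuration contributes $1$. Two consequences matter. By the chain rule $t\,f'(t)=\D{G}Z(G,\beta,t\vec{z^{0}})$, so the standing hypothesis $\D{G}Z(G,\beta,\vec{z^{0}})=0$ is exactly $f'(1)=0$ (equivalently, $f'(1)=\sum_k k c_k=\sum_\sigma \abs{\sigma^{-1}(+)}\,\beta^{d(\sigma)}\prod_{v:\,\sigma(v)=+} z^0_v = \D{G}Z(G,\beta,\vec{z^{0}})$). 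And for every $t$ with $\abs{t}\neq 1$, all coordinates of $t\vec{z^{0}}$ share the absolute value $\abs{t}$, which is either strictly above or strictly below $1$: when $\abs{t}>1$ Theorem~\ref{lyth} gives $f(t)\neq 0$, when $0<\abs{t}<1$ the symmetric form of Theorem~\ref{lyth} gives $f(t)\neq 0$, and $f(0)=c_0=1\neq 0$. Hence every zero of $f$ lies on the unit circle.

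Now suppose, for contradiction, that $Z(G,\beta,\vec{z^{0}})=f(1)\neq 0$. Write $f(t)=c_n\prod_{j=1}^{n}(t-\alpha_j)$; by the previous step $\abs{\alpha_j}=1$ for all $j$, and since $f(1)\neq 0$ no $\alpha_j$ equals $1$. Logarithmic differentiation at $t=1$ then gives $0=f'(1)/f(1)=\sum_{j=1}^{n} 1/(1-\alpha_j)$. The key elementary fact is that for $\abs{\alpha}=1$ with $\alpha\neq 1$ one has $\frac{1}{1-\alpha}+\frac{1}{1-\bar\alpha}=1$ (use $\bar\alpha=1/\alpha$ and simplify), so $\mathrm{Re}\,\frac{1}{1-\alpha}=\tfrac12$. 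Taking real parts of the previous identity yields $0=\sum_{j=1}^{n}\tfrac12=\tfrac n2>0$, the desired contradiction; hence $f(1)=Z(G,\beta,\vec{z^{0}})=0$.

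There is no genuinely hard step: the whole argument hinges on the single observation that every zero $\alpha\neq 1$ of a circle-rooted polynomial contributes exactly $\tfrac12$ to the real part of the logarithmic derivative at $1$, so that $f'(1)/f(1)$ cannot vanish while $f(1)$ does not. The only places calling for a little care are the bookkeeping that identifies $\D{G}Z$ at $\vec{z^{0}}$ with $f'(1)$, and checking that Theorem~\ref{lyth} (together with the trivial value $f(0)=1$) really applies for \emph{every} $t$ off the unit circle, so that no zero of $f$ can escape it.
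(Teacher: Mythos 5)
Your proof is correct, and it follows the paper's strategy up to the last step: both arguments restrict $Z$ to the ray $t\vec{z^{0}}$, identify $f'(1)$ with $\D{G}Z(G,\beta,\vec{z^{0}})$, and use Theorem~\ref{lyth} (in both of its forms, plus the value $f(0)=1$) to confine all zeros of $f$ to the unit circle. Where you diverge is in how you conclude: the paper invokes the Gauss--Lucas theorem as a black box --- the zeros of $f'$ lie in the convex hull of the zeros of $f$, and a point of the unit circle lies in a polygon inscribed in that circle only if it is a vertex, so $f'(1)=0$ forces $f(1)=0$. You instead suppose $f(1)\neq 0$, write $f(t)=c_n\prod_j(t-\alpha_j)$ with $\abs{\alpha_j}=1$ and $\alpha_j\neq 1$, and compute $\mathrm{Re}\inp{f'(1)/f(1)}=\sum_j\mathrm{Re}\frac{1}{1-\alpha_j}=n/2>0$, contradicting $f'(1)=0$. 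This is in effect a self-contained, quantitative proof of exactly the special case of Gauss--Lucas that the paper needs (a zero of $f'$ on the circle must be a zero of $f$ when all zeros of $f$ lie on the circle), resting only on the identity $\mathrm{Re}\frac{1}{1-\alpha}=\frac12$ for $\abs{\alpha}=1$, $\alpha\neq 1$. The paper's version is shorter to state; yours avoids citing Gauss--Lucas and yields the extra information that the logarithmic derivative at $1$ has real part exactly $n/2$ whenever $f(1)\neq0$. Your bookkeeping ($c_0=1$, $c_n=\prod_v z^0_v\neq0$, and $tf'(t)=\D{G}Z(G,\beta,t\vec{z^{0}})$) is also correct and slightly more explicit than the paper's ``comparison of the individual terms.''
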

  \begin{proof} 
  A comparison of the individual terms 
  gives that $f'(1) = \D{G}Z(G, \beta, \vec{z^{0}})$, which is zero by our assumption.
  From the Lee-Yang theorem we obtain that $f(t) \neq 0$ when $\abs{t} \neq
  1$, so all zeros of $f$ must lie on the unit circle.  This together with the
  Gauss-Lucas lemma implies that the derivative of $f$ cannot disappear 
  on a point of the 
  unit circle 
  unless $f$ disappears at the same point.
  Thus, since  $f'(1) = 0$, we get that $Z(G, \beta, \vec{z^{0}}) =  f(1)= 0$.
 \end{proof}

  We have that 
  $
  f(1-\epsilon) = f(1) - \epsilon f'(1) \pm O(\epsilon^2) = \pm O(\epsilon^2), 
  $
  since 
  the first two terms are zero. Let $\vec{e_{u}}$ be the vertex activity (fugacity) vector
  with all zero vertex activities except at vertex $u$ that has activity $1$. The key to the proof is to 
  consider the linear perturbation
  \begin{equation}\label{pert}
  Z(G, \beta, (1-\epsilon) \vec{z^{0}} + \tau \vec{e_{u}}) 
  \end{equation}
  We show that (\ref{pert}) disappears for some $\tau\in \mathbb{C}$, $|\tau| < \epsilon $, in contradiction with the
  Lee-Yang theorem, since under this assumption all components of $(1-\epsilon)\vec{z^{0}} + \tau \vec{e_{u}}$
  have absolute value less than one. By (\ref{eq:1}):
  %\begin{equation}\label{xxxx}
  \[
  Z(G, \beta, (1-\epsilon) \vec{z^{0}} + \tau \vec{e_{u}}) = Z(G, \beta, (1-\epsilon) \vec{z^{0}}) + 
  A((1-\epsilon) \vec{z^{0}}) \tau =
 \mu_{\epsilon}  + A(\vec{z^{0}})\tau  +  \nu_{\epsilon}\tau 
 \]
 % \pm O(\epsilon^2) + \left( A(\vec{z^{0}}) +\pm O(\epsilon)\right) \tau =
 % \end{equation}
 Here $\mu_{\epsilon} = f(1-\epsilon) = \pm O(\epsilon^{2})$, and  $\nu_{\epsilon} =
 A((1-\epsilon) \vec{z^{0}}) - A(\vec{z^{0}}) =
 \pm O(\epsilon)$ by the analyticity of the function $A$.
  Recall that $A(\vec{z^{0}})\neq 0$ by Lemma \ref{neqzero}.
  Then expression (\ref{pert}) disappears at $\tau = -{\mu_{\epsilon}\over A(\vec{z^{0}})+\nu_{\epsilon}}$,
  and also $|\tau| < \epsilon$ if $\epsilon$ is sufficiently small.
  \end{proof}
  
  \section{Acknowledgements}
\label{sec:thanks}

The authors thank the Simons Institute (the Quantum Hamiltonian Complexity program).
The first author was supported by the Berkeley Fellow- ship for Graduate Study and NSF grant CCF-1016896
and the second by NSF Grant No. CCF-0832787, ``Understanding, Coping with, and Benefiting from, Intractability''
and by CISE/MPS  1246641.

\bibliographystyle{alpha}
\bibliography{proof}

\end{document}